\newcommand{\lyxmathsym}[1]{\ifmmode\begingroup\def\b@ld{bold}
  \text{\ifx\math@version\b@ld\bfseries\fi#1}\endgroup\else#1\fi}
\theoremstyle{plain}
\newtheorem{thm}{Theorem}
\theoremstyle{plain}
\newtheorem{lem}[thm]{Lemma}
\global\long\def\rt{\operatorname{shift}}
\begin{document}

\title{Skolem Sequence Based Self-adaptive Broadcast Protocol in Cognitive Radio Networks}

\author{ 
Lin Chen$^{1,2}$, Zhiping Xiao$^2$, Kaigui Bian$^2$, \\Shuyu Shi$^3$, Rui Li$^1$, 
and Yusheng Ji$^3$\\
$^1$Yale University, \\
$^2$Peking University, \\
$^3$National Institute of Informatics, Japan
}

\maketitle
\begin{abstract}
The base station (BS) in a multi-channel cognitive radio (CR) network has to broadcast to secondary (or unlicensed) receivers/users on more than one broadcast channels via channel hopping (CH), because a single broadcast channel can be reclaimed by the primary
(or licensed) user, leading to \emph{broadcast failures}. Meanwhile, a secondary receiver needs to synchronize its clock with the BS's clock to avoid broadcast failures caused by the possible clock drift between the CH sequences of the secondary receiver and the BS. In this paper, we propose a CH-based broadcast protocol called \emph{SASS}, which enables a BS to successfully broadcast to secondary receivers over multiple broadcast channels via channel hopping. Specifically, the CH sequences are constructed on basis of a mathematical construct---the \emph{Self-Adaptive Skolem sequence}. Moreover, each secondary receiver under SASS is able to adaptively synchronize its clock with that of the BS without any information exchanges, regardless of any amount of clock drift. 
\end{abstract}

\section{Introduction}

In an \emph{infrastructure-based (or cellular)} cognitive radio (CR)
network, the base station (BS) has to broadcast to secondary receivers/users
on more than one broadcast channels via a channel hopping (CH) process.

The \emph{broadcast failure problem} can occur for a CH-based broadcast protocol. First, the primary users (PU, or licensed users) may reclaim the spectrum band where broadcast channels reside, and the secondary receivers have
to vacate this channel according to the requirement for protection of PU. Second, there may exist a clock drift between the BS and
the secondary receiver, which can lead to broadcast failures due to the non-overlapping of their CH sequences. Note that any fast synchronization scheme requires necessary information exchange that incurs additional control overhead.

In order to address these problems, we expect the CH-based broadcast protocol to have the following properties.
\begin{enumerate}
  \item \emph{Multiple broadcast channels}. Ideally, broadcast deliveries can occur over all available broadcast channels, thus becoming invulnerable to broadcast failures caused by the PU on a single broadcast channel. 
  \item \emph{Self-adaptive synchronization without information exchange}. The secondary receivers are supposed to synchronize with the BS autonomously via a self-adaptive synchronization process without any information exchange, in order to minimize both the broadcast latency and control overhead.
\end{enumerate}


In this paper, we present a channel-hopping based multi-channel broadcast protocol, called
\emph{SASS}, where the CH sequences are constructed
on basis of a mathematical construct---the \emph{Self-Adaptive Skolem Sequence}. The SASS protocol has the following two noteworthy features.
\begin{itemize}
  \item The BS can successfully broadcast to secondary
receivers over multiple (or up to the maximum number of) available channels within a bounded latency, for increasing (or maximizing) the broadcast channel diversity.
  \item The broadcast latency can be minimized to be near-zero given any amount of clock drift, as each secondary receiver can adaptively synchronize its clock with the BS without any information exchange by leveraging the historical information of successful broadcast deliveries.
\end{itemize}
Our analytical and simulation results show that SASS incurs small broadcast latency and guarantees a high successful delivery rate under various network conditions.

The rest of this paper is organized as follows. We introduce the related work on channel hopping broadcast protocols and Skolem sequence in Section~\ref{sec:bkg}. We provide the system model and formulate the problem in Section~\ref{sec:problem}. In
Section~\ref{sec:protocol}, we describe the SASS broadcast protocol. We evaluate the performance of our proposed broadcast schemes in Section~\ref{sec:sim}. We conclude the paper in Section~\ref{sec:conclude}.

\section{Related Work}
\label{sec:bkg}

The purpose of most CH protocols in the literature is to achieve channel rendezvous between a sender
and a receiver using jump-stay techniques~\cite{JS}, the array-based
quorum systems~\cite{Bian-TMC}, or modular arithmetics~\cite{theis}. It has not been largely investigated as yet to devise multi-channel broadcast protocols in the context of an infrastructure-based CR network.

In~\cite{xiejiang}, a fully-distributed broadcast protocol is proposed
to provide very high successful delivery rate while achieving the
shortest broadcast delay. 
However, most of existing research on channel-hopping based broadcast protocols in CR networks~\cite{Chen-TVT-broadcast,xiejiang}
did not leverage the availability of information gathered from past successful broadcast deliveries to synchronize clocks independent of information exchange and minimize broadcast latencies. Note that existing clock synchronization techniques~\cite{JS} rely on necessary information exchange (e.g., exchange of clock time and/or relevant parameters) after a successful broadcast delivery is established.

\section{System Model} \label{sec:problem}

\textbf{Multi-channel broadcast via channel hopping}. In a CR network,
a broadcast channel may become unavailable at any time due to the
primary user's activities. Therefore, the secondary BS (or broadcast
sender) has to broadcast the content over multiple channels to ensure
successful delivery to the secondary
receiver, which we call a \emph{multi-channel broadcast} process.

Every secondary user (SU) is equipped with a single radio interface, and we use the
channel hopping sequence to define the order in which a radio
(or a SU) visits a set of broadcast channels. Suppose there are $N$
broadcast channels, labeled as $0,1,2,\ldots,N-1$. We
consider a time-slotted communication system, in which a timeslot
means the minimum time unit within which a network node accesses a
channel. Thus, the CH sequence $u_{i}$ of radio $i$ (or SU $i$)
is represented as a sequence of channel indices:
$$
u_{i}=\{u_{i}^{0},u_{i}^{1},u_{i}^{2},\ldots,u_{i}^{t},\ldots\},
$$
 where $u_{i}$ can be an infinite sequence and $u_{i}^{t}\in[0,N-1]$
represents the channel index of $u_{i}$ in the $t$-th timeslot.

\textbf{Clock drift}. Suppose the local clock of radio/SU $i$ is
$\Delta_{i}\in\mathbb{Z}$ timeslots behind the global clock; $\Delta_{i}$
can be a negative integer, and this means that the local clock of
radio/SU $i$ is in fact $-\Delta_{i}$ timeslots ahead of the global
clock. 
From the perspective of the global clock, the CH sequence of radio
$i$ is a sequence that starts from timeslot $\Delta_{i}$:
$$
G(u_{i})=\{G(u_{i})^{\Delta_{i}},G(u_{i})^{\Delta_{i}+1},\ldots,G(u_{i})^{\Delta_{i}+t},\ldots\},
$$
 where $G(u_{i})^{\Delta_{i}+t}=u_{i}^{t},t\in\mathbb{N}\cup\{0\}$.
It is likely that the clock drift is not a multiple
of a timeslot. This is also termed that these two nodes\textquoteright{}
timeslots are  \emph{not aligned}.
However, we can extend the slot duration to be twice of the original
slot duration. We will easily arrive at the conclusion that as long
as two nodes have successful broadcast delivery on a certain channel,
the delivery duration is at least an original slot duration.

\textbf{Successful broadcast delivery}. Given two CH sequences $u_{i}$
and $u_{j}$, if there exists $t_{g}\in\mathbb{Z}$ such that $t_{g}\geq\max\{\Delta_{i},\Delta_{j}\}$
and $G(u_{i})^{t_{g}}=G(u_{j})^{t_{g}}=h$, where $h\in[0,N-1]$,
we say that a \emph{successful broadcast delivery} occurs between radios $i$
and $j$ in the $t_{g}$-th (global) timeslot on broadcast channel
$h$. The $t_{g}$-th timeslot is called a \emph{delivery slot} and
channel $h$ is called a \emph{delivery channel} between SUs~$i$
and $j$.


\section{An ESS-based Broadcast Protocol}

\label{sec:protocol}

\subsection{Preliminaries}
 A \emph{Skolem sequence (SS)}~\cite{skolem},
$\{\zeta_{i}\}_{0\leq i\leq2n-1}$ of order $n$, is a permutation of the sequence of $2n$ integers $\{1,1,2,2,3,3,\ldots,n,n\}$,
and it satisfies the \emph{Skolem property}:
\begin{itemize}
  \item If $\zeta_{i}=\zeta_{j},0\leq i<j\leq2n-1$,
then $j-i=\zeta_{i}+1$.
\end{itemize}
For example, the sequence $l=\{3,1,2,1,3,2\}$
is a Skolem sequence of order $n=3$. Given $i=0$ and $j=4$, we
have $\zeta_{0}=\zeta_{4}=3$, and $j-i=3+1$; given other combinations of
$i$ and $j$, the sequence $l$ also satisfies the Skolem property.
The following lemma holds \cite{skolem}.
\begin{lem}
\label{lemma_SS} A Skolem sequence of order $n$ exists if and only
if $n$ is congruent to 0 or 3 modulo 4.
\end{lem}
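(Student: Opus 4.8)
The statement splits into a necessity direction and a sufficiency direction, and I would treat them separately since they call for completely different techniques.

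For necessity, the plan is to argue by counting. Suppose a Skolem sequence of order $n$ exists, and for each value $k \in \{1,\ldots,n\}$ let $a_k < b_k$ be the two indices with $\zeta_{a_k} = \zeta_{b_k} = k$; by the Skolem property $b_k = a_k + k + 1$. Since $\zeta$ is a permutation of $\{1,1,2,2,\ldots,n,n\}$, the $2n$ indices $\{a_k, b_k\}_{k=1}^{n}$ are exactly $\{0,1,\ldots,2n-1\}$. Summing all indices in two ways gives
\[
\sum_{i=0}^{2n-1} i = \sum_{k=1}^{n}(a_k + b_k) = \sum_{k=1}^{n}\bigl(2a_k + k + 1\bigr),
\]
and since $\sum_{i=0}^{2n-1} i = n(2n-1)$ this rearranges to $4\sum_{k=1}^{n} a_k = n(3n-5)$. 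The left-hand side is divisible by $4$, so a necessary condition is $n(3n-5) \equiv 0 \pmod 4$. Checking the four residues of $n$ modulo $4$ shows this congruence holds precisely when $n \equiv 0$ or $n \equiv 3 \pmod 4$, which is the claimed condition. An equivalent route is a parity argument: a pair with $k$ odd occupies two indices of equal parity, while a pair with $k$ even occupies one index of each parity, and balancing the $n$ even indices against the $n$ odd indices forces the number of odd values in $\{1,\ldots,n\}$ to be even, again yielding $n \equiv 0, 3 \pmod 4$.

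For sufficiency I would give an explicit construction in each admissible residue class, namely $n = 4m$ and $n = 4m + 3$. Writing $m = n/4$ or $m = (n-3)/4$, I would place each value $k$ at a pair of indices specified by closed-form formulas (equivalently, by concatenating a small number of arithmetic-progression blocks), and then verify two things: that the two copies of each $k$ are at distance $k+1$, and that the resulting $2n$ indices cover $\{0,\ldots,2n-1\}$ without repetition. I expect this to be the main obstacle, since the placement rules are case-dependent and the bookkeeping needed to confirm that every index is used exactly once is the delicate part; the distance condition, by contrast, is immediate once the placement rule is fixed. Exhibiting a valid construction for both residue classes, together with the necessity argument above, establishes the stated equivalence.
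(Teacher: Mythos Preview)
Your proposal is correct. The counting argument for necessity works exactly as you describe (the identity $4\sum_{k} a_k = n(3n-5)$ forces $n \equiv 0$ or $3 \pmod 4$, and the parity variant reaches the same conclusion), and for sufficiency you rightly identify that one needs explicit placements for each of the two admissible residue classes, with the verification that the index blocks partition $\{0,\ldots,2n-1\}$ being the part that requires care.

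The paper itself, however, does not prove this lemma: it simply states the result and cites Skolem's 1957 paper, remarking afterward that Skolem gave an efficient general construction there. So there is no in-paper argument to compare against. Your plan matches the classical treatment---Skolem's original paper supplies precisely the kind of explicit block constructions you outline for sufficiency, and the divisibility computation you give for necessity is the textbook one. In effect you are supplying the proof the paper chose to omit rather than offering an alternative to it.
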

In \cite{skolem}, Skolem proposed a very efficient and general construction method for Skolem sequences in his proof of Theorem~2.

\textbf{Extended Skolem sequence}. We define an \emph{extended Skolem
sequence (ESS)}, $\{\zeta'_{i}\}_{0\leq i\leq2(n+1)-1}$ of order $n$,
as a permutation of the sequence of $2(n+1)$ integers: $\{0,0,1,1,2,2,\ldots,n,n\}$.

The sequence satisfies the Skolem property, i.e., if $\zeta_{i}=\zeta_{j},0\leq i<j\le2(n+1)-1$,
then $j-i=\zeta_{i}+1$. For example, the sequence $\zeta'=\{0,0,3,1,2,1,3,2\}$
is an extended Skolem sequence (ESS) of order $n=3$. It follows
immediately from Lemma~\ref{lemma_SS} that an extended Skolem
sequence of order $n$ exists if $n$ is congruent to 0 or 3 modulo 4.

Given a Skolem sequence $\{\zeta_{i}\}_{0\leq i\leq2n-1}$ of order $n$,
we can construct an extended Skolem sequence $\{\zeta'_{i}\}_{0\leq i\leq2(n+1)-1}$
of the same order by inserting two integers $\zeta'_{0}=\zeta'_{1}=0$ at the
beginning of the original SS---i.e., by letting $\zeta'_{i}=0$ when $i=0,1$;
and $\zeta'_{i}=\zeta_{i-2}$ when $1<i\leq2(n+1)-1$. In an extended Skolem
sequence $\zeta'$ of order $n$, any integer $k\in[0,n]$ appears exactly
twice in the ESS.

\subsection{ESS-based CH Sequences}

In this subsection, we use ESS to generate channel hopping sequences for the base station and secondary receivers.

When the channel number $N$ is congruent to 0 or 1 modulo 4, then
$N-1$ is congruent to 0 or 3, and by Lemma \ref{lemma_SS}, there
exists an ESS $\{\zeta'_{i}\}_{0\leq i\leq2N-1}$ of order $N-1$. For
example, when $N=4$, the ESS-based CH sequence is
$$
\{0,0,3,1,2,1,3,2\}.
$$

When $N\not\equiv0,1\mod{4}$, we can easily use the \emph{padding
technique}\footnote{We may as well use the \emph{downsizing
technique} as an alternative. The downsizing technique means that we discard some channels so
that the new channel number $N'\leq N$ is congruent to 0 or 1 modulo
4. We only need to discard at most 2 channels.} to transform it into the case with the channel number $N'$
congruent to 0 or 1 modulo 4.
According to the
padding technique, we increase the channel number $N$ to the minimum
integer $N'$ so that $N'\geq N$ and $N'$
is congruent to 0 or 1 modulo 4. Obviously, $N\leq N'\leq N+2$. We regard the newly added $(N'-N)$
channels as aliases of the original $N$ channels\footnote{For example, if
the channel number is $3$, we add a new channel, say, Channel 4,
so that the new channel number amounts to 4. Channel 4 serves as an
alias of Channel 1.}.
  With the padding scheme, we can focus on the case where the channel
number $N'$ is congruent to 0 or 1 modulo 4.
\begin{itemize}
  \item Let $u$ and $v$ be two CH sequences of the same length, say, $T$.
We denote the set of delivery channels by $$\mathcal{C}(u,v)\triangleq\{h\in[0,N-1]:\exists t\in[0,T-1],u^{t}=v^{t}=h\}.$$
  \item Let $\mathcal{D}(u,v)$ denote the set of delivery slots between $u$
and $v$ and $$\mathcal{D}(u,v)\triangleq\{t\in[0,T-1]:u^{t}=v^{t}\}.$$
  \item We define the notion of circular shift to represent the clock drift,
i.e., $$\rt(u,\alpha)=\{w^{0},w^{1},w^{2},\ldots,w^{T-1}\}$$ is a sequence
of length $T$ and $w^{t}\triangleq u^{(t+\alpha)\bmod{T}}$.
  \item We let $$\prod_{k=1}^{K}\mu_{k}=\mu_{1}\parallel\mu_{2}\parallel\cdots\parallel\mu_{K}$$
denote the concatenation of sequences $\mu_{k}$'s.
\end{itemize}

Consider two ESS-based CH sequences with different amount of clock
drift, $\rt(u,\alpha)$ and $\rt(u,\beta)$, where $u$ is an ESS. Their
relative clock drift is $\alpha-\beta$.
\begin{itemize}
\item Theorem \ref{thm:simple} shows the relationship between the set of
broadcast delivery channels $\mathcal{C}(\rt(u,\alpha),\rt(u,\beta))$
and the relative clock drift $(\alpha-\beta)$.
\item Theorem \ref{thm:density} shows the relationship between the set
of broadcast delivery slots $\mathcal{D}(\rt(u,\alpha),\rt(u,\beta))$
and the relative clock drift $(\alpha-\beta)$.
\end{itemize}
According to these two theorems, a secondary receiver is able to figure
out the exact clock drift between the BS and the receiver itself by
simply looking at the historical information---i.e., the set of broadcast
delivery channels---such that the receiver can synchronize to the
BS without exchanging any control messages.
\begin{thm}
\label{thm:simple}Suppose that $u=\{\zeta'_{i}\}_{0\leq i\leq2N'-1}$ is an ESS of
order $N'-1$.
\begin{enumerate}
\item   $\mathcal{C}(\rt(u,\alpha),\rt(u,\beta))=\{0,1,2,\ldots,N'-1\}$ if and only if $\alpha-\beta\equiv0\pmod{2N'}$.
\item  $\mathcal{C}(\rt(u,\alpha),\rt(u,\beta))=\{|g|-1\}$, where $|g|\leq N'$, if and only if $\alpha-\beta\equiv g\not\equiv0\pmod{2N'}$.
\end{enumerate}
\end{thm}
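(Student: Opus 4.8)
The plan is to reduce everything to a single modular condition on the relative shift $s\triangleq(\alpha-\beta)\bmod 2N'$ and then read off both parts by a case split on $s$; throughout write $T=2N'$ for the period of the circular shift. First I would use the Skolem property of the ESS $u$ to record that each value $h\in[0,N'-1]$ occupies exactly two positions $p_{h}<q_{h}$ with $q_{h}-p_{h}=h+1$. A delivery on channel $h$ in some global slot $t$ means $\rt(u,\alpha)^{t}=\rt(u,\beta)^{t}=h$, i.e.\ both $a\triangleq(t+\alpha)\bmod T$ and $b\triangleq(t+\beta)\bmod T$ are positions of $h$, so $a,b\in\{p_{h},q_{h}\}$; subtracting the defining relations gives $a-b\equiv s\pmod{T}$ with $a-b\in\{0,\pm(h+1)\}$. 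This yields the key equivalence
$$h\in\mathcal{C}(\rt(u,\alpha),\rt(u,\beta))\iff s\equiv 0\ \text{or}\ s\equiv\pm(h+1)\pmod{2N'}.$$
The reverse implication is genuine: once $s$ matches one of these residues, choosing $t\equiv p_{h}-\alpha$ or $t\equiv q_{h}-\alpha\pmod{T}$ exhibits an actual delivery slot, since the length of the sequences is exactly $T$.

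Part 1 is then immediate. If $s\equiv 0$, the first alternative above holds for every $h$, so $\mathcal{C}=\{0,1,\ldots,N'-1\}$. Conversely, if $s\not\equiv 0$ the analysis of Part 2 shows $|\mathcal{C}|=1$, which for $N'\ge 2$ is never the full channel set; hence the full set forces $s\equiv 0$. Note that the full set is self-symmetric, so no sign ambiguity intrudes here.

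For Part 2 I would assume $s\not\equiv 0$ and let $g$ be the balanced representative of $\alpha-\beta$ with $|g|\le N'$. The first alternative drops out, so $h\in\mathcal{C}$ iff $h+1\equiv\pm g\pmod{2N'}$. The crux is that $h\mapsto h+1$ maps $[0,N'-1]$ bijectively onto $[1,N']$, and the residues $\{1,\ldots,N'\}$ together with their negatives exhaust the nonzero residues modulo $2N'$, the only coincidence being $N'\equiv-N'$. Hence exactly one channel lies in $\mathcal{C}$, namely $h=|g|-1$ (verifying the cases $g>0$, $g<0$, and $g=\pm N'$ separately), which gives $\mathcal{C}=\{|g|-1\}$.

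I expect the only real difficulty to be modular bookkeeping rather than any conceptual step: one must confirm that the circular wrap-around introduces no rendezvous beyond the three residues $\{0,\pm(h+1)\}$, and must treat the self-paired value $g=\pm N'$ (channel $N'-1$) where the congruences $h+1\equiv g$ and $h+1\equiv -g$ merge. The one genuinely subtle point is that membership in $\mathcal{C}$ depends on $s$ only through the pair $\{s,-s\}$, so $\mathcal{C}$ determines $|g|$ — and hence the delivery channel $|g|-1$ — but not the sign of $g$; thus the converse direction recovers $\alpha-\beta\equiv\pm g$, and the balanced representative is what makes the statement read cleanly. This residual sign is precisely the information that the companion delivery-slot result (Theorem~\ref{thm:density}) is needed to resolve.
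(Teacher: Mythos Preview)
Your argument is correct and follows essentially the same line as the paper's proof: both reduce to the Skolem-property observation that a rendezvous on channel $h$ forces the relative shift to lie in $\{0,\pm(h+1)\}\pmod{2N'}$, and then invert this to read off $\mathcal{C}$. Your version is in fact somewhat more careful than the paper's---you organize the argument around the clean equivalence $h\in\mathcal{C}\iff s\in\{0,\pm(h+1)\}$, handle the circular wrap-around explicitly, and correctly flag that the converse in Part~2 recovers only $|g|$ (not its sign), a point the paper's statement and one-direction proof leave implicit.
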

\begin{proof}
The first clause is obvious. Now it suffices to show that $$\alpha-\beta\equiv g\not\equiv0\pmod{2N'}$$ will imply $$\mathcal{C}(\rt(u,\alpha),\rt(u,\beta))=\{|g|-1\},$$
where $|g|\leq N'$. If $g>0$, then $\exists0\leq i_{0}<j_{0}<2N'$ s.t.
$u^{i_{0}}=u^{j_{0}}=g-1$. Thus $j_{0}-i_{0}=g$, $\rt(u,g)^{i_{0}}=u^{i_{0}+g}=u^{i_{0}}=g-1$,
i.e., $g-1\in\mathcal{C}(\rt(u,g),u)$.
Suppose $x\in\mathcal{C}(\rt(u,g),u)$,
then $\exists0\leq i_{0}<2N'$ s.t. $\rt(u,g)^{i_{0}}=u^{i_{0}+g}=u^{i_{0}}=x$.
By the definition of ESS, $x+1=(i_{0}+g)-i_{0}=g$, $x=g-1$. Therefore
$$\mathcal{C}(\rt(u,\alpha),\rt(u,\beta))=\mathcal{C}(\rt(u,g),u)=\{|g|-1\}.$$
If $g<0$, then $\beta-\alpha\equiv-g\not\equiv0\pmod{2N'}$. Thus $\mathcal{C}(\rt(u,\beta),\rt(u,\alpha))=\{|-g|-1\}=\{|g|-1\}$.\end{proof}

\begin{thm}
\label{thm:density} Suppose that $u=\{\zeta'_{i}\}_{0\leq i\leq2N'-1}$ is an ESS of
order $N'-1$.
\begin{enumerate}
\item $\mathcal{D}(\rt(u,\alpha),\rt(u,\beta))=\{0,1,2,3,\ldots,2N'-1\}$ if and only if $\alpha-\beta\equiv0\pmod{2N'}$.
\item If $\alpha-\beta\equiv g\pmod{2N'}$, where $|g|\leq N'$, then

\begin{enumerate}
\item $|\mathcal{D}(\rt(u,\alpha),\rt(u,\beta))|=1$ if and only if $0<|g|<N'$.
\item $|\mathcal{D}(\rt(u,\alpha),\rt(u,\beta))|=2$ if and only if $|g|=N'$.
\end{enumerate}
\end{enumerate}
\end{thm}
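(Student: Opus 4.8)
The plan is to mirror the structure of the proof of Theorem~\ref{thm:simple} and reduce the two-parameter problem to a one-parameter one. First I would record that the ESS $u$ has length $T=2N'$, so $\rt(u,\alpha)^{t}=u^{(t+\alpha)\bmod 2N'}$, and then reindex the slots by $s=(t+\beta)\bmod 2N'$. Writing $g=\alpha-\beta$, this gives $\rt(u,\alpha)^{t}=u^{(s+g)\bmod 2N'}$ and $\rt(u,\beta)^{t}=u^{s}$, and since $t\mapsto s$ is a bijection of $[0,2N'-1]$ onto itself we obtain $|\mathcal{D}(\rt(u,\alpha),\rt(u,\beta))|=|\mathcal{D}(\rt(u,g),u)|$. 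Because $\mathcal{D}$ is symmetric in its two arguments, swapping $\alpha$ and $\beta$ replaces $g$ by $-g$, so I may assume $0\le g\le N'$.

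For the first clause, $g\equiv 0\pmod{2N'}$ makes the two shifted sequences identical, so every slot is a delivery slot; the converse follows once the second clause establishes that $g\not\equiv 0$ produces at most two delivery slots, which is strictly fewer than $2N'$ for $N'\ge 2$.

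For the second clause the crucial input is Theorem~\ref{thm:simple}: when $g\not\equiv 0\pmod{2N'}$ the only delivery channel is $|g|-1$. Hence a slot $s$ lies in $\mathcal{D}(\rt(u,g),u)$ if and only if $u^{s}=|g|-1$ and $u^{(s+g)\bmod 2N'}=|g|-1$. Let $p<q$ be the two positions at which the value $|g|-1$ occurs in $u$; by the Skolem property $q-p=|g|$. Then $u^{s}=|g|-1$ forces $s\in\{p,q\}$, so I only have to test these two candidates. For $s=p$ one has $(p+g)\bmod 2N'=q$ with no wraparound (since $q\le 2N'-1$), so $p$ is always a delivery slot. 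For $s=q$ one has $q+g=p+2|g|$, and the whole question reduces to whether $(p+2|g|)\bmod 2N'\in\{p,q\}$.

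That last modular check is the heart of the matter and the only place the two subcases split. When $0<|g|<N'$ we have $0<2|g|<2N'$, so $p+2|g|$ is congruent to neither $p$ nor $q=p+|g|$ modulo $2N'$ (those would require $2|g|\equiv 0$ or $|g|\equiv 0$), the candidate $q$ fails, and exactly one delivery slot remains. When $|g|=N'$ we have $2|g|=2N'\equiv 0$, so $q+g\equiv p\pmod{2N'}$ passes and a second delivery slot appears, giving $|\mathcal{D}|=2$. The trichotomy $g=0$, $0<|g|<N'$, $|g|=N'$ over the range $|g|\le N'$ then yields $|\mathcal{D}|=2N',1,2$ respectively, from which both ``if and only if'' statements follow. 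The main obstacle to handle carefully is the wraparound bookkeeping modulo $2N'$, together with the identity $N'\equiv -N'\pmod{2N'}$, which is precisely what makes the boundary case $|g|=N'$ exceptional.
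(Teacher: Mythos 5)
Your proof is correct and follows essentially the same route as the paper's: both reduce to the relative shift $g$, invoke Theorem~\ref{thm:simple} to pin the delivery channel to $|g|-1$, and then decide between one and two delivery slots by checking whether $2|g|\equiv 0\pmod{2N'}$ at the two occurrences of that channel. Your explicit reindexing to $\mathcal{D}(\rt(u,g),u)$ and the candidate-by-candidate check of $s=p$ and $s=q$ is just a cleaner writeup of the paper's ``remaining $h$'' positions $i_0\pm(h+1)$ argument.
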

\begin{proof}
The first clause is obvious. Now suppose $\alpha-\beta\equiv g\pmod{2N'}$
($|g|\leq N'$). It suffices to show that $0<|g|<N'\Rightarrow|\mathcal{D}(\rt(u,\alpha),\rt(u,\beta))|=1$
and $|g|=N'\Rightarrow|\mathcal{D}(\rt(u,\alpha),\rt(u,\beta))|=2$.
By Theorem \ref{thm:simple}, $|\mathcal{D}(\rt(u,\alpha),\rt(u,\beta)|\geq1$
and $\mathcal{C}(\rt(u,\alpha),\rt(u,\beta))=\{|g|-1\}$, i.e., the delivery
channel $h=|g|-1$ and $\exists0\leq i_{0}<2N'$ s.t. $\rt(u,\alpha)^{i_{0}}=\rt(u,\beta)^{i_{0}}=h$.
There are only two $h$'s in $\rt(u,\alpha)$ and $\rt(u,\beta)$, respectively.
Without loss of generality, the remaining $h$ in $\rt(u,\alpha)$ is
$\rt(u,\alpha)^{\left[i_{0}-(h+1)\right]\bmod{2N'}}$ and that in $\rt(u,\beta)$
is $\rt(u,\beta)^{\left[i_{0}+(h+1)\right]\bmod{2N'}}$. If $i_{0}+(h+1)\equiv i_{0}-(h+1)\pmod{2N'}$,
we have $2(h+1)\equiv2|g|\equiv0\pmod{2N'}$. Since $g\neq0$, we
conclude that $|g|=N'$.
Therefore if $|g|<N'$, $i_{0}+(h+1)\not\equiv i_{0}-(h+1)\pmod{2N'}$
and $|\mathcal{D}(\rt(u,\alpha),\rt(u,\beta))|=1$; if $|g|=N'$, then
$i_{0}+(h+1)\equiv i_{0}-(h+1)\pmod{2N'}$ and $|\mathcal{D}(\rt(u,\alpha),\rt(u,\beta))|=2$.
\end{proof}
\subsection{SASS: A Two-phase Broadcast Protocol}

The broadcast CH sequence for the BS (sender) is $\prod_{n=0}^{\infty}\mu$,
where $\mu$ is a prespecified ESS of order $N'-1$ and thus with a
length of $2N'$. 

Every secondary receiver has an initial CH sequence. Then, the CH
sequence is dynamically updated by the receiver based on the historical
information of whether successful broadcast delivery occurs in the
past timeslots. Specifically, it calibrates the clock in a self-adaptive
manner by leveraging the mathematical properties of ESS, so as to
synchronize to the BS (sender) for minimizing the broadcast latency.

\textbf{Phase~1: ESS-based CH sequence generation}. A secondary receiver
initially uses the CH sequence
$$
\prod_{n=0}^{\infty}\rt(\mu,n)
$$
until the first successful broadcast delivery occurs. Meanwhile,
it counts/maintains the number of successful broadcast deliveries
since its local clock's timeslot $\lfloor\frac{t}{2N'}\rfloor\cdot2N'$,
where $t$ is the current timeslot according to its local clock.

The secondary receiver will not wait long for the first successful
broadcast delivery to occur. If the PU signal is not present in all channels,
the first successful broadcast delivery will occur within $4N'(N'-1)$
timeslots after both the secondary sender and receiver start channel
hopping, as shown by Theorem~\ref{thm:firstbdst}.

\begin{thm}
  \label{thm:firstbdst}
Under SASS, the first successful broadcast delivery occurs within $4N'(N'-1)$ timeslots after both the secondary sender and receiver start channel hopping.
\end{thm}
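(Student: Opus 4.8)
The plan is to analyse the BS/receiver comparison one \emph{block} at a time, a block being the window of $2N'$ consecutive timeslots during which the receiver plays a single circular shift $\rt(\mu,n)$. First I would fix the relative clock drift and reduce it mod $2N'$ to a value $\delta\in\{0,1,\dots,2N'-1\}$. Because the BS sequence $\prod_n\mu$ has period $2N'$, its restriction to any length-$2N'$ window is itself a circular shift of $\mu$; and since the receiver's consecutive blocks are spaced by exactly one BS period, the BS presents the \emph{same} shift $\rt(\mu,\delta)$ on every block. Thus on the receiver's $n$-th block the two sequences being compared are precisely $\rt(\mu,n)$ and $\rt(\mu,\delta)$, and the problem collapses to understanding $\mathcal{C}(\rt(\mu,n),\rt(\mu,\delta))$ for each $n$ --- which Theorem~\ref{thm:simple} already computes.

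Reading off Theorem~\ref{thm:simple}, in block $n$ the only candidate delivery channel is $h_n=|g_n|-1$, where $g_n\equiv n-\delta\pmod{2N'}$ is the representative with $|g_n|\le N'$, the single exception being the \emph{special} block $n\equiv\delta\pmod{2N'}$, on which the receiver and BS sequences coincide and every channel $0,\dots,N'-1$ is a delivery channel. To capture the hypothesis ``the PU is not present on all channels'' I would fix a nonempty set $F\subseteq\{0,\dots,N'-1\}$ of PU-free (available) channels; then block $n$ produces a genuine successful delivery iff $h_n\in F$, while the special block always succeeds since $F\neq\emptyset$. The theorem is therefore equivalent to the combinatorial claim that, for every $\delta$ and every nonempty $F$, some block among $0,1,\dots,2N'-3$ has its candidate channel in $F$.

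To prove that claim I would note that each channel $h\le N'-2$ occurs as a candidate for exactly the two residues $g=\pm(h+1)$, channel $N'-1$ only for $g=N'$, and the residue $g=0$ (the special block) covers all channels at once. The blocks $0,\dots,2N'-3$ realise $2N'-2$ consecutive residues $g_n$, i.e.\ they omit exactly two consecutive residues. A short case split then shows these blocks always cover all of $\{0,\dots,N'-1\}$: if $0$ is not omitted the special block already does it, and if $0$ is omitted then (the two omitted residues being consecutive) at most one further residue is lost, which by the $g=\pm(h+1)$ redundancy cannot remove any channel from coverage. Hence for any nonempty $F$ a delivery occurs no later than block $2N'-3$; since that block ends at global timeslot $(2N'-2)\cdot 2N'-1=4N'(N'-1)-1$, the first successful delivery falls within the first $4N'(N'-1)$ timeslots, as claimed. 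I would also exhibit the extremal case $\delta\equiv 2N'-2\pmod{2N'}$, $F=\{0\}$, in which channel $0$ reappears only at block $2N'-3$, to confirm the constant is tight.

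The step I expect to be the real obstacle is the quantitative tightening in the previous paragraph. A naive covering argument uses $2N'-1$ consecutive blocks --- whose residues already include all of $1,\dots,N'$ and hence every channel --- but that only yields a delivery within roughly $4N'^2-2N'$ timeslots. Squeezing out the extra block to reach the exact $4N'(N'-1)$ forces the careful accounting of the two occurrences $g=\pm(h+1)$ of each channel relative to the (possibly omitted) special block. A secondary point to flag is that $F$ must be treated as fixed over the acquisition window: if the PU-occupancy pattern could vary adversarially every slot, the adversary could block each block's single candidate channel and delay delivery until the special block at $n=2N'-1$, so the bound relies on the available-channel set being static on this $O(N'^2)$ horizon.
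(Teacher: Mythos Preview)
Your proof is correct and follows essentially the same strategy as the paper's: decompose the time axis into length-$2N'$ blocks, invoke Theorem~\ref{thm:simple} to identify the unique candidate delivery channel in each block, and then argue that any $2N'-2$ consecutive blocks already realise every channel in $\{0,\dots,N'-1\}$. The paper carries this out only for $N'=4$ via Table~\ref{tab:sdrrvr} and asserts the general case ``similarly,'' so your explicit residue/coverage argument (including the $\pm(h+1)$ redundancy when the special block is excluded, and the tightness example $\delta\equiv 2N'-2$, $F=\{0\}$) is in fact a more complete execution of the same idea.
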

\begin{proof}
Suppose the sender uses the ESS $u$ of length $2N'$, and the receiver uses $u$, $\rt(u,1)$, $\rt(u,2)$, $\rt(u,3),\ldots$, $\rt(u,2{N'}-1)$, sequentially. Since the receiver exhausts all possible cyclic rotations of $u$, the first successful broadcast delivery will occur within $2N'\cdot 2N'=4{N'}^2$ slots.

Then, we can further improve this upper bound to $4N'(N'-1)$. Now we establish our argument for the case $N'=4$ as an example. For a general $N'$, we can have a similar argument.

Suppose $N'=4$. From the perspective of the sender's clock, the receiver uses $\rt(u,a)$, $\rt(u,a+1)$, $\rt(u,a+2),\ldots$, $\rt(u,a+7)$, where $a$ is the clock drift between the sender and the receiver. Now we consider the consecutive $4{N'}^2=64$ slots.

\begin{table}

\caption{\textnormal{The cases of broadcast channels if the sender uses $u$ and the receiver uses $\rt(u,a)$ with a clock drift of $a$ slots, where $a$ varies from 0 to $2N'-1=7$, given $N'=4$. The ``All'' here means that if the sender and receiver are synchronized, they can have broadcast delivery on all channels. The ``0'' here means that if the clock drift between the sender and receiver is 1, they can have broadcast delivery on Channel 0, and similarly for ``1'', ``2'' and ``3'' hereinafter. \label{tab:sdrrvr}}}
\centering

\begin{tabular}{|c|c||c|c|}
\hline
Rx & Bdcast ch(s) & Rx & Bdcast ch(s) \\
\hline
     $\rt(u,0)$       &        All &    $\rt(u,1)$       &          0 \\
\hline
   $\rt(u,2)$       &          1  &    $\rt(u,3)$       &          2 \\
\hline
     $\rt(u,4)$      &          3 &     $\rt(u,5)$      &          2 \\
\hline
   $\rt(u,6)$       &          1  &    $\rt(u,7)$       &          0 \\
\hline
\end{tabular}
\end{table}

Table~\ref{tab:sdrrvr} shows the cases of broadcast channels for all possible clock drifts. If the sender and the receiver are synchronized, the receiver starts from the second row (i.e., it starts from $\rt(u,0)=u$), and then uses $\rt(u,1)$, $\rt(u,2)$, $\rt(u,3),\ldots$, $\rt(u,7)$, and then back to $\rt(u,0)=u$ and so on. If they have a clock drift of $a$ slots, the receiver uses $\rt(u,a)$, $\rt(u,a+1)$, $\rt(u,a+2),\ldots$, $\rt(u,a+7)$, then back
to $\rt(u,a)$ and so on. If $a=0$, broadcast delivery occurs on all channels, which is the best case. The worst case is when the receiver starts from $\rt(u,1)$. In this case, when it arrives at $\rt(u,6)$, successful broadcast delivery must have occurred because from $\rt(u,1)$ to $\rt(u,6)$, they have tried all possible channels 0, 1, 2, and 3. Thus first successful delivery will occur within $2\times 4\times (2\times 4-2)$ slots. For a general $N'$, the upper bound is $2{N'} (2{N'}-2)=4N'({N'}-1)$.

\end{proof}

\textbf{Phase~2: Self-adaptive clock calibration}. Upon the first
successful broadcast delivery, the secondary receiver enters the clock
calibration phase: it knows the exact clock drift between the sender
(BS) and itself, and then adaptively synchronize to the sender.

According to the local clock of the secondary receiver, we group every
$2N'$ timeslots into a (time) frame, e.g. timeslots 0 to $2N'-1$
of the receiver form the first frame. Suppose SU~$i$ is the sender,
SU~$j$ is the receiver, and that the first broadcast delivery occurs
in channel $\alpha\in[0,N'-1]$ in timeslot $\tau_{1}$ that lies
in the $\phi$-th frame according to SU~$j$'s clock.

By the definition of ESS, there exists another timeslot $\tau_{2}\neq\tau_{1}$
that lies in the $\phi$-th frame and contains channel $\alpha$.
Let $u_{j}[\phi]$ denote the segment of SU $j$'s CH sequence in
the $\phi$-th frame according to SU $j$'s local clock, and we have
$u_{j}[\phi]=\rt(\mu,\phi-1)$.

In the calibration phase, the receiver (i.e. SU $j$) continues using
its original CH sequence $\prod_{n=0}^{\infty}\rt(\mu,n)$ until
the end of its $\phi$-th frame; and in the meanwhile, SU $j$ continues
counting the number of successful broadcast deliveries since its local
clock's timeslot $\lfloor\frac{t}{2N'}\rfloor\cdot2N'=2N'(\phi-1)$,
where $t$ is the current timeslot according to its local clock.

When the $\phi$-th frame ends, the receiver checks if successful
broadcast delivery occurs in timeslot $\tau_{2}$ and obtains the
total number of successful broadcast deliveries in the $\phi$-frame,
which is denoted by $SB[\phi]$. The secondary receiver chooses different
CH sequences in the following cases.

\emph{Case~1}: Successful broadcast delivery occurs in timeslot $\tau_{2}$
and $\alpha\neq N'-1$. In this case, the receiver knows that the
segment of SU $i$'s CH sequence in the $\phi$-th frame according
to SU $j$'s clock, denoted by $u_{i}[\phi]$, is exactly $u_{j}[\phi]=\rt(\mu,\phi-1)$.
From the next frame (the $(\phi+1)$-th frame) on, the receiver uses
$$
\prod_{n=0}^{\infty}\rt(\mu,\phi-1)
$$
as its new CH sequence.

Fig.~\ref{fig:case1} shows an example of Case~1 with $N'=4$. The
PU occupies channels 0 and 3. The first successful delivery occurs
on channel 1 (thus $\alpha=1\neq N'-1$) and in the 4th slot of the
$\phi$-th frame. Slot~$\tau_{2}$ is the 6th slot of frame $\phi$
and successful delivery occurs in slot $\tau_{2}$. Thus the receiver
(SU $j$) knows that it has been synchronized with the sender (SU
$i$). So it continues using $\rt(\mu,\phi-1)$ as its CH sequence.

\emph{Case~2}: $\alpha=N'-1$. This implies that successful broadcast
delivery occurs in timeslot $\tau_{2}$; and that $u_{i}[\phi]$ is
either $u_{j}[\phi]$ or $\rt(u_{j}[\phi],N')$.
\begin{itemize}
\item In the next frame (the $(\phi+1)$-th frame), the receiver uses $\rt(u_{j}[\phi],N')$
as its CH sequence and counts the number of successful broadcast deliveries
in the $(\phi+1)$-th frame, denoted by $SB[\phi+1]$.
\item From the $(\phi+2)$-th frame on, the receiver knows the exact difference
between its clock and the sender's, and it can choose the CH sequence
synchronized with the sender. If $SB[\phi]\geq SB[\phi+1]$, it chooses
$$
\prod_{n=0}^{\infty}u_{j}[\phi]=\prod_{n=0}^{\infty}\rt(\mu,\phi-1);
$$
otherwise, it uses $$
\prod_{n=0}^{\infty}u_{j}[\phi+1]=\prod_{n=0}^{\infty}\rt(\mu,\phi-1+N').
$$
\end{itemize}
Fig.~\ref{fig:case2} shows an example of Case~2 with $N'=4$. The
PU occupies channels 1 and 2. The first successful delivery occurs
on channel 3 (thus $\alpha=3=N'-1$) and in the 3th slot of the $\phi$-th
frame. Slot $\tau_{2}$ is the 7th slot of frame $\phi$ and successful
delivery occurs in slot $\tau_{2}$. Thus the receiver (SU $j$) knows
that the segment of the sender's CH sequence in frame $\phi$ is either
$u_{j}[\phi]=\{2,1,3,2,0,0,3,1\}$ or $\rt(u_{j}[\phi],4)=\{0,0,3,1,2,1,3,2\}$.
So it tries $\rt(u_{j}[\phi],4)=\{0,0,3,1,2,1,3,2\}$ in frame
$(\phi+1)$ and finds that $SB[\phi]=2$, $SB[\phi+1]=4$ and $SB[\phi]<SB[\phi+1]$.
Therefore it knows that the sender uses $\rt(u_{j}[\phi],4)=\{0,0,3,1,2,1,3,2\}$.
So it synchronizes its clock with the sender and uses $\{0,0,3,1,2,1,3,2\}$
as its CH sequence from frame~$(\phi+2)$ on.

\emph{Case~3}: Successful broadcast delivery does not occur in timeslot
$\tau_{2}$. This implies that $\alpha\neq N'-1$ and that either
$u_{i}[\phi]=\rt(u_{j}[\phi],\alpha+1)$ or $u_{i}[\phi]=\rt(u_{j}[\phi],-(\alpha+1))$.
\begin{itemize}
\item The receiver uses $\rt(u_{j}[\phi],\alpha+1)$ and $\rt(u_{j}[\phi],-(\alpha+1))$
as its CH sequences in the $(\phi+1)$-th and $(\phi+2)$-th frames;
it counts the numbers of successful broadcast deliveries in these
two frames, denoted by $SB[\phi+1]$ and $SB[\phi+2]$, respectively.
\item From the $(\phi+3)$-th frame on, if $$SB[\phi+1]\geq SB[\phi+2],$$
the receiver chooses $$
\prod_{n=0}^{\infty}u_{j}[\phi+1]=\prod_{n=0}^{\infty}\rt(\mu,\phi-1+(\alpha-1));
$$
otherwise, it uses $$
\prod_{n=0}^{\infty}u_{j}[\phi+2]=\prod_{n=0}^{\infty}\rt(\mu,\phi-1-(\alpha-1)).
$$
\end{itemize}
Fig.~\ref{fig:case3} shows an example of Case~3 with $N'=4$. The
PU occupies channels 2 and 3. The first successful delivery occurs
on channel 1 (thus $\alpha=1\neq N'-1$) and in the 6th slot of the
$\phi$-th frame. Slot $\tau_{2}$ is the last slot of frame $\phi$
and successful delivery does not occur in slot $\tau_{2}$. Thus the
receiver (SU $j$) knows that the segment of the sender's CH sequence
in frame $\phi$ is either $\rt(u_{j}[\phi],2)=\{0,0,3,1,2,1,3,2\}$
or $\rt(u_{j}[\phi],-2)=\{2,1,3,2,0,0,3,1\}$. So it tries $\rt(u_{j}[\phi],2)=\{0,0,3,1,2,1,3,2\}$
in frame $(\phi+1)$ and tries $\rt(u_{j}[\phi],-2)=\{2,1,3,2,0,0,3,1\}$
in frame $(\phi+1)$. It finds that $SB[\phi+1]=4$, $SB[\phi+2]=0$
and $SB[\phi+1]>SB[\phi+2]$. Thus it knows that the sender uses $\rt(u_{j}[\phi],2)=\{0,0,3,1,2,1,3,2\}$.
So it synchronizes its clock with the sender and uses $\rt(u_{j}[\phi],2)=\{0,0,3,1,2,1,3,2\}$
as its CH sequence from frame~$(\phi+3)$ on.

After completing the clock calibration, every secondary receiver is
synchronized to the BS. Since the BS and secondary receivers use the
same ESS-based CH sequence, the broadcast latency will be minimized
to zero upon the successful clock calibration in Phase~2.

\begin{figure}[t]
 \centering \subfigure[An example of Case 1 with $N'=4$. The PU occupies channels 0 and
3. Therefore the grids with 0 and 3 inside are gray.]{ \label{fig:case1} \includegraphics[height=1.3cm]{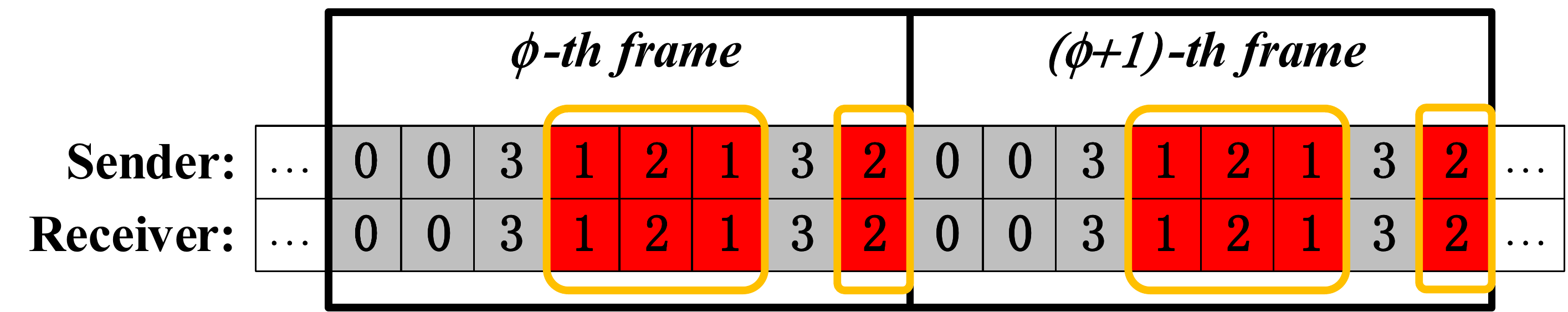} }
\hfill{}\subfigure[An example of Case 2 with $N'=4$. The PU occupies channels 1 and
2. Therefore the grids with 1 and 2 inside are gray.]{ \label{fig:case2}\includegraphics[height=1.3cm]{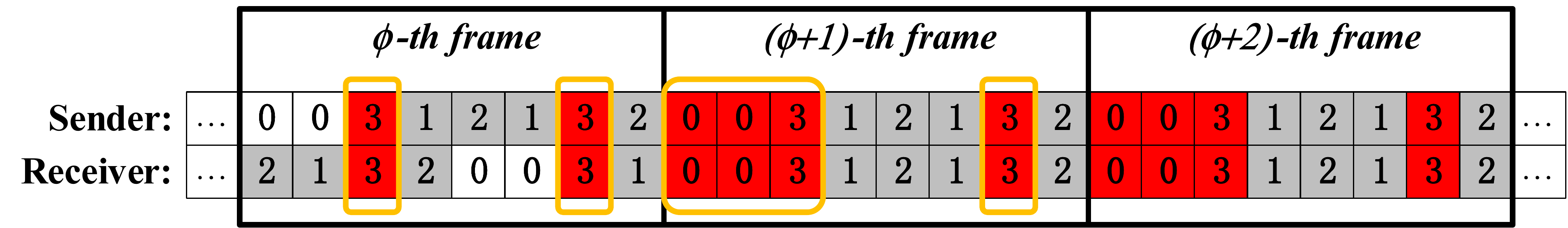} } \hfill{}\subfigure[An example of Case 3 with $N'=4$. The PU occupies channels 2 and
3. Therefore the grids with 2 and 3 inside are gray.]{ \label{fig:case3}\includegraphics[height=1.3cm]{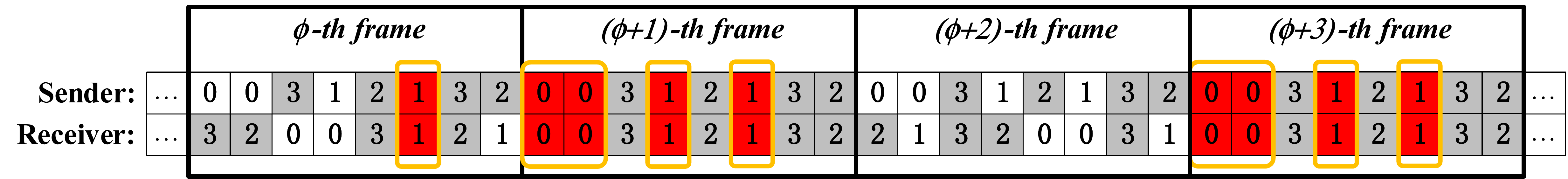} } \caption{Examples illstrating the three cases. Each grid denotes a timeslot.
The number in a grid denotes the index of a channel that the node
hops to in this timeslot. A gray grid represents that the PU occupies
this channel in the timeslot, while a non-gray one represents no PU
presence on this channel in the timeslot. A red grid indicates that
successful broadcast delivery occurs in the timeslot and over this
channel.}

\label{fig:3cases}
\end{figure}

\section{Performance Evaluation} \label{sec:sim}
\subsection{Simulation Setup}
In this section, we compare the performance of the proposed broadcast
protocol SASS and existing CH based broadcast protocols via simulation results: the
random channel hopping broadcast protocol (RCH), the canonical
Skolem sequence based broadcast protocol without self-adaptivity (CSS) and the asynchronous channel hopping protocol (ACH) \cite{Bian-TMC}.

In each simulated broadcast pair (secondary sender/BS and secondary receiver), each
node can access $N$ broadcast channels (i.e., the number of broadcast
channels available to the broadcast pair is $N$). And both of the
two nodes generate their CH sequences using the agreed broadcast protocol
(i.e., either the proposed SASS protocol or other existing broadcast
protocols) and perform channel hopping in accordance with the sequences.

\textbf{Primary user traffic}. We simulated a number of $X$ primary
transmitters operating on $X$ channels independently, and these channels
were randomly chosen in each simulation run. In most existing work,
it is assumed that a primary user transmitter follows a ``busy/idle\textquotedblright{}
transmission pattern on a licensed channel \cite{Geirhofer,Huang},
and we assume the same traffic pattern here --- i.e., the busy period
has a fixed length of $b$ timeslots, and the idle period follows
an exponential distribution with a mean of $l$ timeslots. A channel
is considered \emph{unavailable} when PU signals are
present in it. The intensity of primary user traffic can be characterized
as $PU=\frac{X}{N}\cdot\frac{b}{l+b}\times100\%$.

\textbf{Random clock drift}. In a CR network, the nodes may lose clock
synchronization or even link connectivity at any time when they experience
the broadcast failure problem due to primary user activities. Hence,
the clock of the nodes are not necessarily synchronized. In each simulation
run, each secondary node determines its clock time independently of
other nodes.


\subsection{Proportion of Successful Broadcast Slots}

We define the proportion
of successful delivery slots in the first $t$ timeslots, $\rho(t)$,
as the percentage of timeslots in the first $t$ timeslots in which successful
broadcast delivery occurs.

Figs.~\ref{fig:pu0}, \ref{fig:pu25}, \ref{fig:pu50} and \ref{fig:pu75}
illustrate the results given the PU traffic $PU=0\%$, $25\%$, $50\%$
and $75\%$, respectively. In the proposed SASS protocol, the proportion
of successful broadcast slots progressively approximates to the theoretical
maximum $1-PU$---the proportion values are $100\%$, $75\%$, $50\%$, and $25\%$ respectively. However, the performance of other protocols is approximately
stable at $\frac{1-PU}{N}$.


\begin{figure}[htbp]
\centering
 \subfigure[$PU=0\%$;]{ \label{fig:pu0}\includegraphics[width=1.8in]{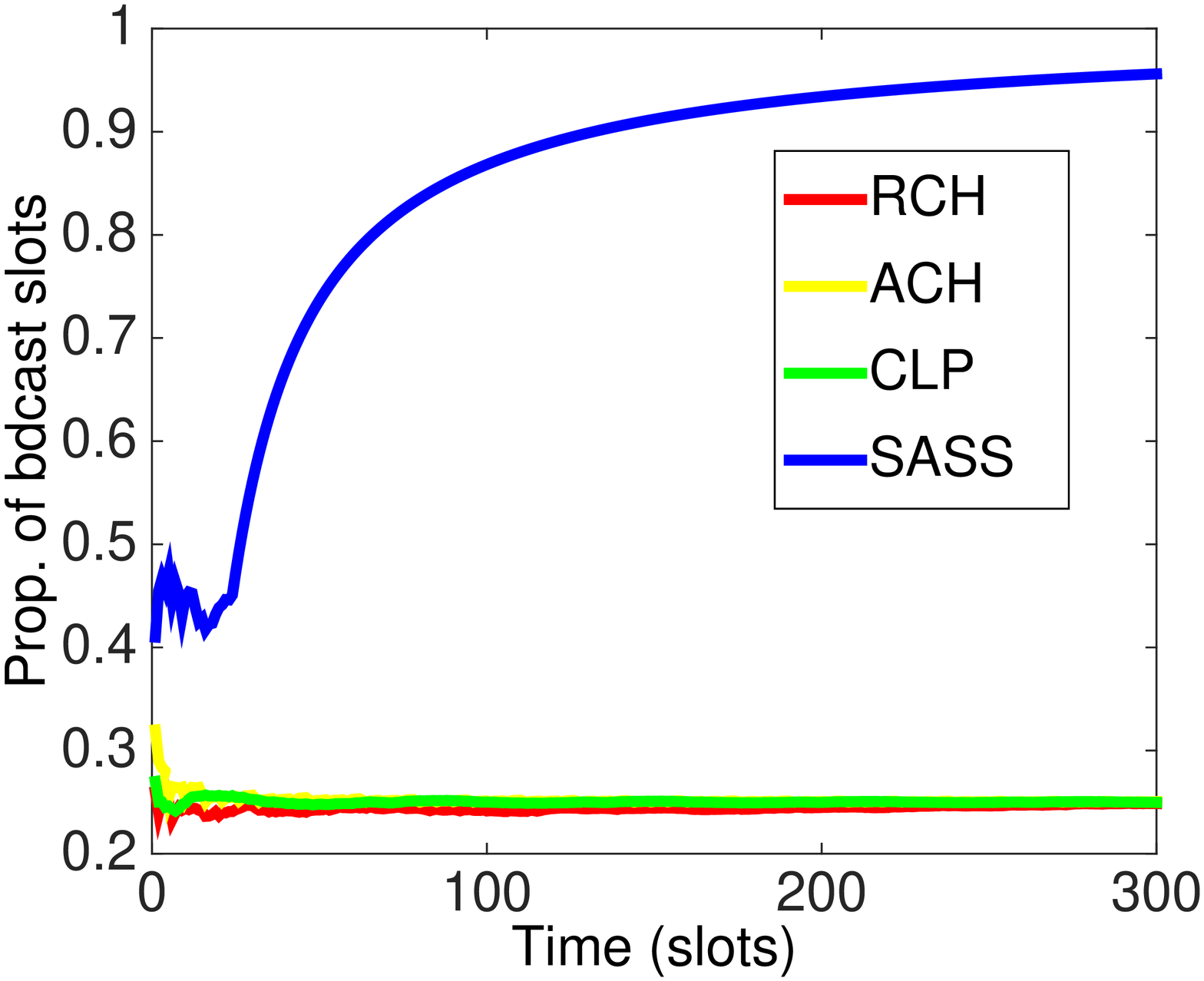}}
\hfill{}\subfigure[$PU=25\%$;]{ \label{fig:pu25}\includegraphics[width=1.8in]{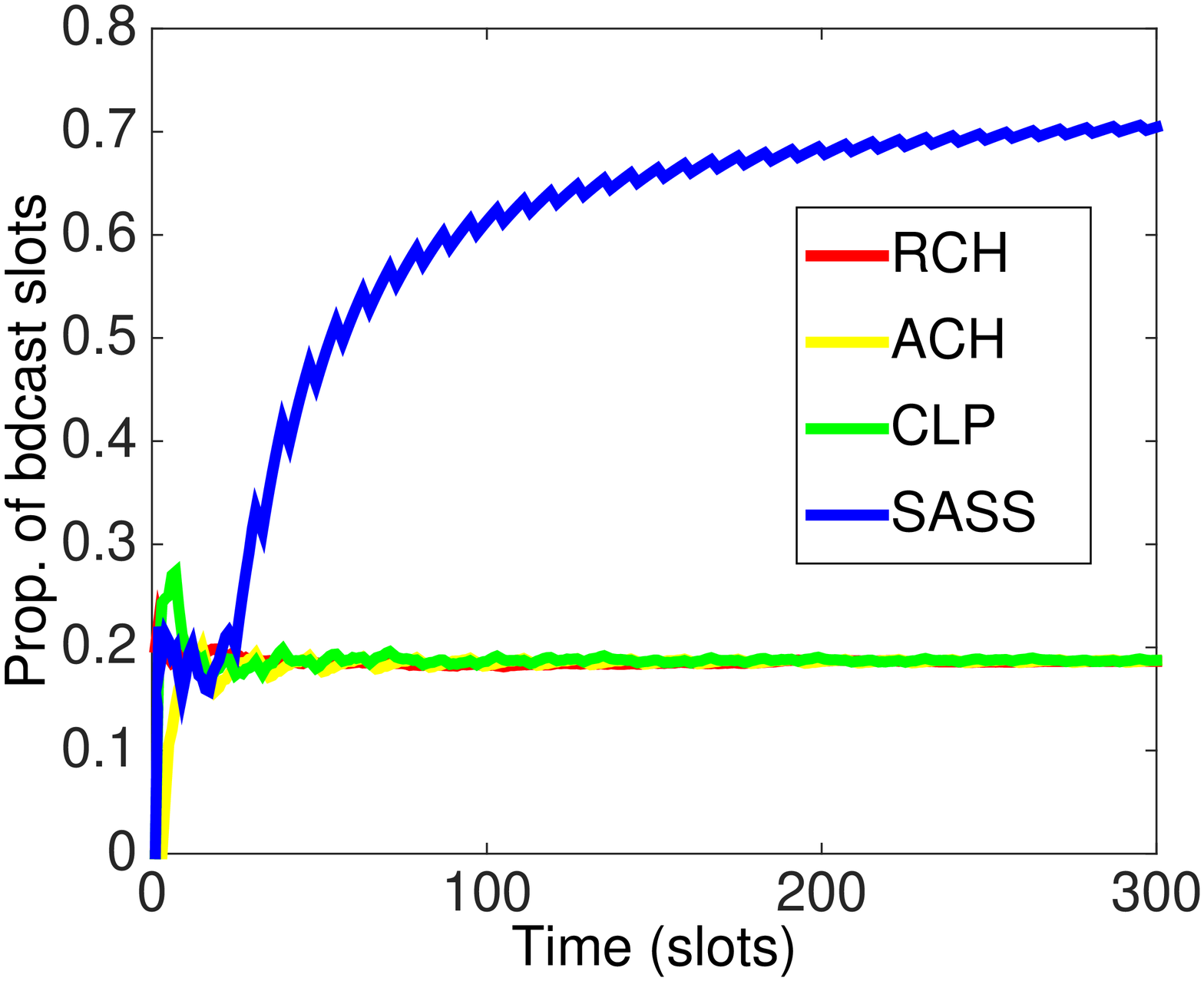}}
\hfill{}\subfigure[$PU=50\%$;]{ \label{fig:pu50}
\includegraphics[width=1.8in]{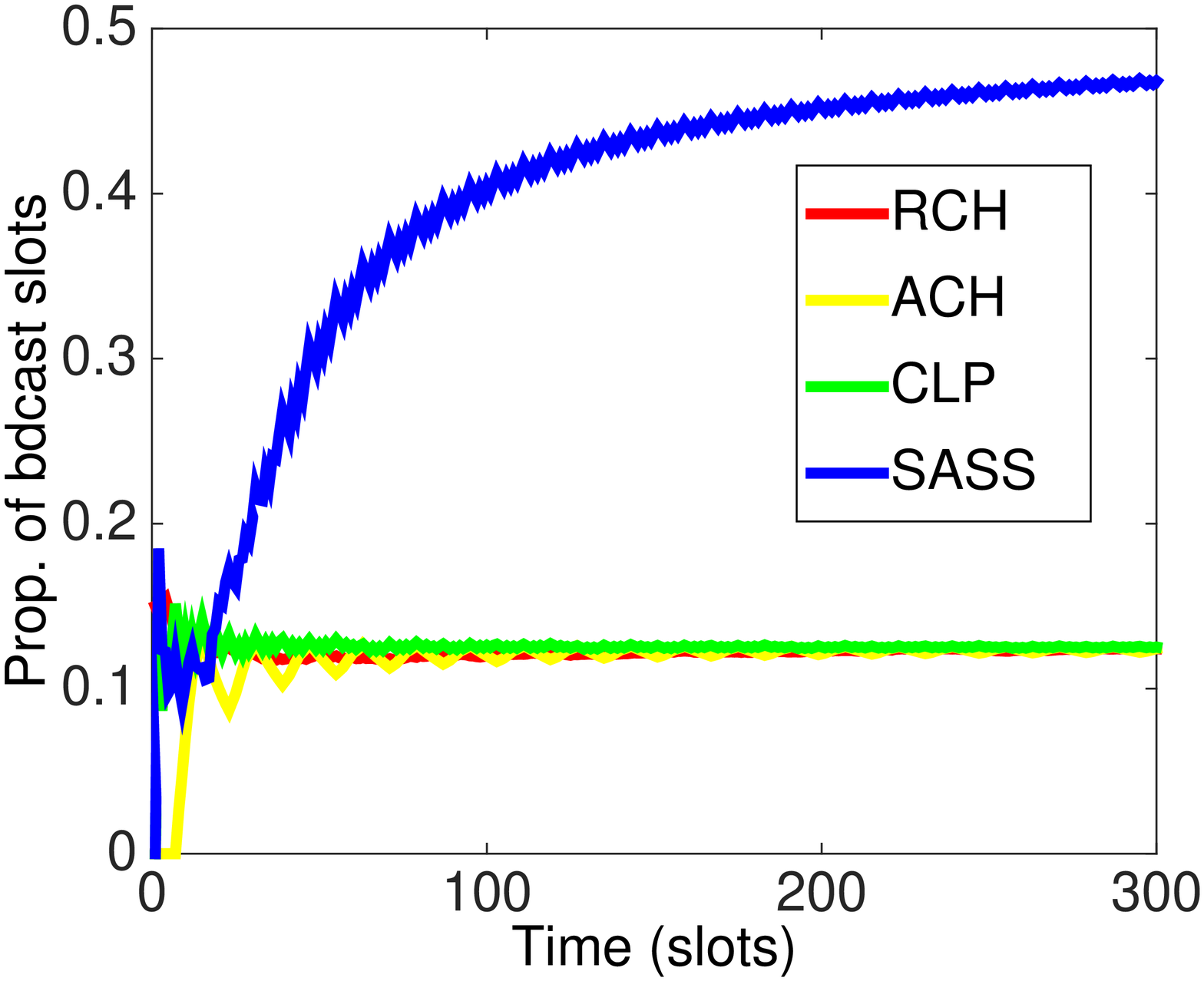}}
\hfill{}\subfigure[$PU=75\%$.]{ \label{fig:pu75}\includegraphics[width=1.8in]{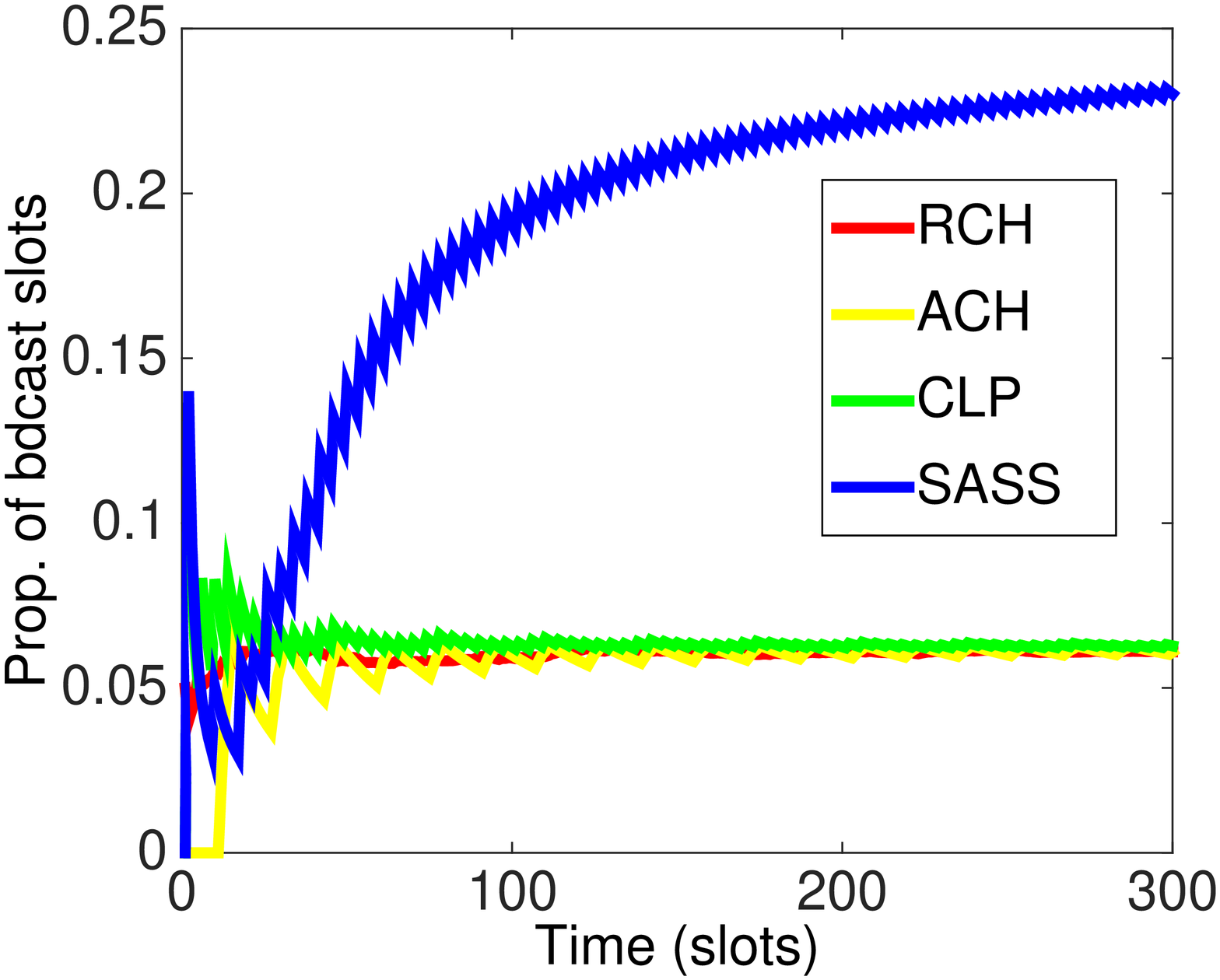}}
\caption{Proportion of successful broadcast slots vs. time. \label{fig:rate}}
\end{figure}

\subsection{Broadcast Latency}

In this set of simulations, we simulate $1000$ pairs of nodes, and investigate the broadcast latencies under the proposed SASS and other
existing CH based broadcast protocols in the following five scenarios: (1) the
latency until the first successful broadcast delivery occurs; and
the average delivery latency in the first (2) 50, (3) 100, (4) 150, and
(5) 200 timeslots. The results are showed in Fig.~\ref{fig:latency}.

We observe that the latency under the SASS protocol progressively
outperforms the other three protocols as the number of successful
broadcast deliveries increases. Its delivery latency drops
down to 7 and then 5 and finally decreases below 5, while the other
three protocols' latency remains above 15. This can be attributed to the fact that the SASS protocol can synchronize all of the receivers with the broadcast sender, thus
greatly reducing the delivery latency on average.

\begin{figure}
\centering
\includegraphics[width=3in]{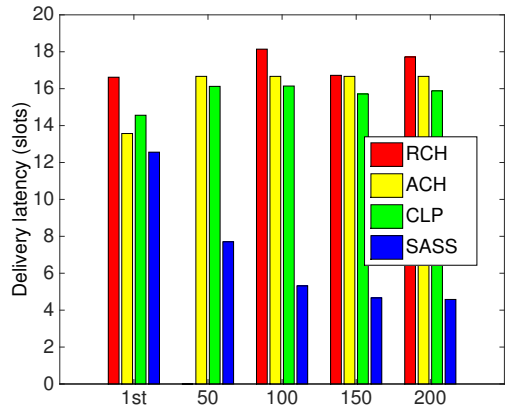}\caption{The average delivery latency in the following five scenarios: the latency until the first successful broadcast delivery occurs (see the leftmost group of bar labeled ``1st''); and the average delivery latency
in the first 50, 100, 150, and 200 timeslots (please see the groups of bars labeled
``50, 100, 150, and 200'', respectively). \label{fig:latency}}
\end{figure}

\section{Conclusion} \label{sec:conclude}

In this paper, we propose a channel hopping based multi-channel broadcast
protocol, called SASS, where the CH sequences are constructed on basis
of the self-adaptive extended Skolem sequence. SASS allows the network
base station (broadcast sender) to broadcast over multiple channels
such that the broadcasts can be successfully delivered to secondary
receivers. Meanwhile, each secondary receiver can infer the difference
between its clock and the clock of the sender, and then adaptively
synchronize with the sender to further reduce the broadcast latency.
SASS is robust to the broadcast failure caused by primary user activities
and the clock drift between two CH sequences.


\end{document}